\documentclass{llncs}

\usepackage{simplemargins}
\usepackage{times,amsmath,amssymb,xspace}
\usepackage{txfonts}
\usepackage{color}
\usepackage{graphicx}
\usepackage{multirow}
\usepackage{amsmath}
\usepackage{amssymb}
\usepackage{theorem}
\usepackage{epsfig}
\usepackage{fancyvrb}
\usepackage{url}
\usepackage{cite}
\usepackage{multirow}
\usepackage{wrapfig}
\usepackage{comment}
\usepackage{authblk}
\usepackage{array}
\usepackage{enumitem}

\setleftmargin{1.3in}
\setrightmargin{1.3in}
\settopmargin{1.3in}
\setbottommargin{1.3in}

\numberwithin{equation}{section}	
\theoremstyle{plain}			
\newtheorem{thm}{Theorem}
\newtheorem{Theorem}[thm]{Theorem}
\newtheorem{Lemma}[thm]{Lemma}

\theoremstyle{definition}		
\newtheorem{Definition}[thm]{Definition}

\setcounter{secnumdepth}{5}

\newcommand{\vsep}{\hspace{3mm} | \hspace{3mm}}

\begin{document}
\title{Undecidability of a Theory of Strings,\\ Linear Arithmetic over
  Length, and String-Number Conversion}

\author{Vijay Ganesh and Murphy Berzish}
\institute{University of Waterloo\\
\{vganesh, mtrberzi\}@uwaterloo.ca\\
}

\maketitle

\begin{abstract}
\label{sec:abstract}
In recent years there has been considerable interest in theories over
string equations, length function, and string-number conversion
predicate within the formal verification and computer security
communities. SMT solvers for these theories, such as Z3str2, CVC4, and
S3, are of immense practical value in exposing security
vulnerabilities in string-intensive programs. At the same time, these
theories are of great interest to logicians, with many open
questions relating to their decidability and complexity.

\vspace{0.1cm} Motivated by these open questions and above-mentioned
applications, we study a first-order, many-sorted, quantifier-free
theory $T_{s,n}$ of string equations, linear arithmetic over string
length, and string-number conversion predicate and prove three
theorems. First, we prove that the satisfiability problem for the
theory $T_{s,n}$ is undecidable via a reduction from a theory of
linear arithmetic over natural numbers with \textit{power predicate},
we call power arithmetic. Second, we show that the string-numeric
conversion predicate is expressible in terms of the power predicate,
string equations, and length function. This second theorem (in
conjunction with the reduction we propose for the undecidability
theorem) suggests that the power predicate is expressible in terms of
word equations and length function if and only if the string-numeric
conversion predicate is also expressible in the same fragment. Such
results are very useful tools in comparing the expressive power of
different theories, and for establishing decidability and complexity
results. Third, we provide a consistent axiomatization $\Gamma$ for
the functions and predicates of $T_{s,n}$, and prove that the theory
$T_{\Gamma}$, obtained via logical closure of $\Gamma$, is not a
complete theory.
\end{abstract}

\section{\bf Introduction} 

The satisfiability problem for theories over finite-length strings
(aka words) has long been studied by mathematicians such as
Quine~\cite{Quine}, Post, Markov and Matiyasevich~\cite{Matiyasevich},
Makanin~\cite{makanin}, and Plandowski~\cite{KarhumakiPM97,
  plandowski99, plandowski2006}. Post, Markov, and Quine were
motivated by the connections between theories over word
equations\footnote{In this paper, we interchangeably use the terms
  \textit{word equations} and \textit{string equations}. The term
  ``word equations'' is the convention among logicians, while formal
  verification researchers tend to use the term ``string equations''.}
and Peano arithmetic, while Matiyasevich's motivation for studying
them was their connection to Diophantine
equations~\cite{Matiyasevich}.

More recently there has been considerable interest in efficient
solvers for theories over string equations in the formal verification,
software engineering, and security research communities.  Examples of
such solvers include Z3str2~\cite{z3strcav} and CVC4~\cite{Liang2014},
both of which support the quantifier-free (QF) first-order many-sorted
theory $T_{s,n}$ of string equations, length, and string-integer
conversions.  This theory is expressive enough that many string-related
library functions and programming constructs from languages such as C,
C++, Java, PHP, and JavaScript can be easily encoded in terms of its
functions and predicates. The expressive power of $T_{s,n}$ and
efficient practical string solvers have enabled many applications in
program analysis and verification~\cite{z3strcav, prateek,
  rupak}. Examples include dynamic symbolic execution aimed at
automated bug-finding~\cite{hampi, prateek}, and analysis of
database/web applications~\cite{emmiMS2007, rupak, WassermannSu2007}.

Given the fundamental nature of the theory $T_{s,n}$ and its fragments
(e.g., note that word equations essentially form a free semigroup
studied intensively by mathematicians over the last several
decades~\cite{lothaire}), it is no surprise that there is strong
motivation from logicians to study their decidability and
complexity. In the 1940's, Post and Markov conjectured that the
fully-quantified first-order theory of string equations (i.e.,
quantified sentences over Boolean combination of string equations)
must be undecidable. In his 1946 paper, Quine \cite{Quine} showed that
this theory is indeed undecidable. In 1977, Makanin famously proved
that the satisfiability problem for the quantifier-free theory of
string equations is decidable~\cite{makanin}. This result is often
considered as one of the most complex proofs in theoretical computer
science. In recent years, Plandowski and others considerably improved
Makanin's results and showed that satisfiability problem for string
equations is in PSPACE~\cite{plandowski2006}. In 2012, Ganesh et
al. showed that $\forall\exists$-fragment of positive string equations
is undecidable, strengthening Quine's result and
establishing the boundary between decidability and undecidability for
string equations~\cite{ganesh2012}. Additionally, Ganesh et al. also
proved conditional decidability results for the quantifier-free theory
of string equations and linear arithmetic over the string length
function~\cite{ganesh2012}.

As automated reasoning tools and algorithms for the satisfiability
problem for the theory $T_{s,n}$ continue to be intensively researched
and developed, it is a natural question to ask whether the theory is
indeed decidable. This question has been open for at least over the 15
years since interest in string solvers dramatically increased in the
formal methods community, and is the primary focus of this paper.

\subsection{Problem Statement}

We answer the following three questions in this paper:

\begin{enumerate}
\item Is the satisfiability problem for the quantifier-free fragment
  of a first-order two-sorted theory $T_{s,n}$ of finite-length
  strings over a finite alphabet decidable, whose functions and
  predicates are as follows: concatenation function and the equality
  predicate over string terms, string to natural number conversion
  predicate, length function from string terms to natural numbers, and
  linear arithmetic over natural numbers and length function.

  An answer to this question may give us clues to decidability
  questions relating to certain fragments of $T_{s,n}$ that remain
  open. For example, it is not known whether the quantifier-free
  theory of word equations and equality over the length function is
  decidable, and this problem has been open for at least 5
  decades~\cite{Matiyasevich}. Furthermore, as discussed above, the answer
  impacts practical string solvers such as
  CVC4 and Z3str2 which in currently implement incomplete algorithms
  to decide the satisfiability problem for the theory $T_{s,n}$. We
  show that the satisfiability problem for the theory $T_{s,n}$ is
  undecidable.

\item Is the string-numeric conversion predicate expressible in terms
  of string equations and length function? This question is important
  from a theoretical point of view because if such an expressibility result exists,
  then this immediately settles the open question
  regarding the satisfiability problem for the quantifier-free theory
  of string equations and length.
  
\item What is a consistent (possibly minimal) axiomatization $\Gamma$
  for the functions and predicates of $T_{s,n}$? Is the first-order
  many-sorted fully-quantified theory $T_{\Gamma}$ obtained as a
  logical closure of the axiom set $\Gamma$ complete? (Note that the
  existential closure of the quantifier-free first-order many-sorted
  theory $T_{s,n}$ is a subset of $T_{\Gamma}$.)
\end{enumerate}


\subsection{Contributions in Detail}

In greater detail, the contributions of this paper are as follows:

\begin{enumerate}
\item We prove that the satisfiability problem for the quantifier-free
  theory of string equations, linear arithmetic over string length,
  and string-number conversion is undecidable. This problem has been
  open for some time, and is of great interest to formal verification
  researchers~\footnote{Note that the theory $T_{s,n}$ is stronger
    than the quantifier-free theory of string equations and linear
    arithmetic over string length function, since $T_{s,n}$
    additionally has the string-number conversion predicate.}. The
  ability to model string concatenation, equality, linear arithmetic
  over length, and string-natural number conversions is particularly
  useful in identifying security vulnerabilities in applications
  developed using many modern programming languages, including
  JavaScript web
  applications~\cite{OISSTA14,saxena10kudzu}. (Section~\ref{sec:undecidability})

\item We also show that the $\pi$ predicate from the power arithmetic
  theory $T_p$, which asserts the equality $z = x * 2^y$,
  is expressible in terms of the $numstr$ predicate from
  $T_{s,n}$. More precisely, we encode $\pi$ using only the $numstr$
  predicate, string equations, and string length function. In the
  above-mentioned undecidability theorem, we establish that
  $numstr$ can be encoded using only the $\pi$ predicate, string
  equations, and string length function. These two reductions put
  together suggest that the $\pi$ predicate is expressible using string
  equations and length function iff $numstr$ is. Expressibility results
  are very useful tools in constructing reductions, distinguishing the
  expressive powers of various theories, and in establishing
  (un)-decidability results. Additionally, our expressibility results
  suggest that the $numstr$ predicate is much more complex, both from
  a theoretical and a practical point-of-view, than it seems at first
  glance. (Section~\ref{sec:definability})

\item We establish a consistent finite axiomatization $\Gamma$ for the
  functions and predicates in the language $L$ of
  $T_{s,n}$. Additionally, we show that the first-order many-sorted
  $L$-theory $T_{\Gamma}$, that is the closure of the axioms $\Gamma$,
  is not a complete theory. That is, there are $L$-sentences $\phi$
  such that $T_{\Gamma}$ does not entail either $\phi$ or its
  negation. (Section~\ref{sec:soundness})
\end{enumerate}

The paper is organized as follows: In Section~\ref{sec:prelim} we
provide the syntax and semantics of the theory $T_{s,n}$. In
Section~\ref{sec:undecidability} we prove the undecidability of the
satisfiability problem of $T_{s,n}$.  In
Section~\ref{sec:definability}, we show a reduction from the power
arithmetic theory to $T_{s,n}$. In Section~\ref{sec:soundness}, we
discuss the consistency of an axiom system $\Gamma$ for the language of $T_{s,n}$, and
in Section~\ref{sec:incompleteness} we establish
that the theory $T_{\Gamma}$ is incomplete. In
Section~\ref{sec:relwork} we provide a comprehensive overview of the
decidability/undecidability results for theories of strings over the
last several decades, and the practical relevance of this theory
in the context of verification and security. Finally, we conclude in
Section~\ref{sec:conclusions}, and provide a list of open problems
related to various extensions and fragments of the theory $T_{s,n}$
some of which have been open for many decades now.

\section{Preliminaries}
\label{sec:prelim}

In this section, we define the syntax and semantics of the
first-order, many-sorted, language $L$ of string (aka word) equations
with concatenation, length function over string terms, linear
arithmetic over natural numbers and the length function, and
string-number conversion predicate.
In Section~\ref{sec:soundness}, we will present an axiom system $\Gamma$ for this language
and prove that it is consistent.

\subsection{The Language $L$: Syntax for Theories over String Equations, Length, and String-Number Conversion}

We first define the countable language $L$ below, i.e., its sorts, and
constant, function, and predicate symbols.

\begin{enumerate}

\item {\bf Sorts:} The language is many-sorted, with a string sort
  $str$ and a natural number sort $num$. The Boolean sort $Bool$ is
  standard. When necessary, we write the sort of an $L$-term $t$
  explicitly as $t:sort$.

\item {\bf Finite Alphabet:} We fix a finite alphabet $\Sigma =
  \{0,1\}$ over which all strings are defined. As necessary, we may
  subscript characters of $\Sigma$ with an $s$ to indicate that their
  sort is str.

\item {\bf String and Natural Number Constants:} We fix a two-sorted
  set of constants $Con = Con_{str} \cup Con_{num}$.  The set
  $Con_{str}$ is a subset of $\Sigma^*$, the set of all finite-length
  string constants over the finite alphabet $\Sigma$. Elements of
  $Con_{str}$ will be referred to as {\it string constants} or simply
  {\it strings}. The empty string is represented by
  $\epsilon$. Elements of $Con_{num}$ are the {\emph natural numbers}
  starting from 0. As necessary, we may subscript numbers by $n$ to
  indicate that their sort is num.

\item {\bf String and Numeric Variables:} We fix a disjoint two-sorted
  set of variables $var = var_{str} \cup var_{num}$; $var_{str}$
  consists of string variables, denoted $X,Y,S, \ldots$ that range
  over string constants, and $var_{num}$ consists of numeric
  variables, denoted $m,n,\ldots$ that range over the natural numbers.

\item {\bf String Function Symbols:} The string function symbols
  include the concatenation operator $\cdot: str \times str \rightarrow str$
  that take as argument two string terms and outputs a string term,
  and the length function $len: str \rightarrow num$ that takes as
  argument a string and outputs a natural number.

\item {\bf Linear Arithmetic Function Symbols:} The natural number
  (aka numeric) function symbols include the addition symbol $+: num \times
  num \rightarrow num$, that takes as argument two numeric terms and
  outputs a numeric term. (Following standard practice in mathematical
  logic literature, we allow multiplication by constants as a
  shorthand.)

\item {\bf String Predicate Symbols:} The predicate symbols over
  string terms include the equality symbol $=_s: str \times str \rightarrow
  Bool$ that takes as argument two string terms and evaluates to a
  Boolean value, and the string-number conversion predicate
  $numstr:num \times string \rightarrow Bool$.

\item {\bf Natural Number Predicate Symbols:} The predicate symbols
  over natural number terms include the equality symbol $=_n: num \times
  num \rightarrow Bool$, and the inequality predicate $\leq: num \times num
  \rightarrow Bool$.

\end{enumerate}

\subsection{Terms and Formulas in the Language $L$}

\noindent{\bf Terms:} $L$-terms may be of string or numeric sort. A
string term ($t_{str}$ in Figure~\ref{fig:syntax}) is inductively
defined as either an element of $var_{str}$, an element of
$Con_{str}$, or a concatenation of string terms (denoted by the
function $concat$ or interchangeably by the $\cdot$ operator).  A numeric or
natural number term ($t_{num}$ in Figure~\ref{fig:syntax}) is an
element of $var_{num}$, an element of $Con_{num}$, the length function
applied to a string term, a constant multiple of a length term, or a
sum of length terms. (Note that for convenience we may write
concatenation and addition as $n$-ary functions, even though we define
them as binary operators.)

\noindent{\bf Atomic Formulas:} There are five types of atomic
formulas as given in Figure~\ref{fig:syntax}: (1) word equations
($A_{wordeqn}$), (2) linear arithmetic predicates over natural numbers
and length constraints ($A_{num}$), and (3) string-numeric conversion
predicates ($A_{numstr}$).

\noindent{\bf Quantifier-free Formulas:} Boolean combination of atomic
formulas. The term ``quantifier-free'' formulas means
that each free variable is implicitly existentially quantified and no
explicit quantifiers may be written in the formula.

\noindent{\bf Formulas and Prenex-normal Form:} $L$-Formulas are
defined inductively over atomic formulas (see
Figure~\ref{fig:syntax}). The symbol Qx refers to a block of
quantifiers over a set $x$ of variables. We assume that formulas are
always represented in prenex-normal form (a block of quantifiers
followed by a quantifier-free formula).

\noindent{\bf Free and Bound Variables, and Sentences:} We say that a
variable under a quantifier in a formula $\phi$ is bound. Otherwise we
refer to variables as free. A formula with no free variables is called
a sentence.

\begin{figure}[t!]
\[
\begin{array}{llll}
  F & \Coloneqq & Atomic  \vsep F \wedge F  \vsep F \vee F  \vsep \neg F \vsep Qx. F\\
  Atomic & \Coloneqq & A_{wordeqn}  \vsep A_{num}  \vsep A_{numstr} & \\
  A_{wordeqn} & \Coloneqq & t_{str} = t_{str} & \\
  A_{num} & \Coloneqq & t_{num} = t_{num} \vsep t_{num} < t_{num} &  \\
  A_{numstr} & \Coloneqq & numstr(n, s) \\
  & & \text{where } n \in t_{num}, s \in t_{str} \\
  t_{str} & \Coloneqq & a \vsep X \vsep concat(t_{str},...,t_{str}) \\
  & & \text{where} \hspace{1mm} a \in Con_{str} \hspace{1mm} \& \hspace{1mm} X \in var_{str}\\
  t_{num} & \Coloneqq & m \vsep v \vsep len(t_{str}) \vsep t_{num} + t_{num} \\
  & & \text{where} \hspace{1mm} m \in Con_{num} \hspace{1mm} \& \hspace{1mm} v \in var_{num}\\
\end{array}
\]
\caption{\label{fig:syntax} The syntax of $L$-formulas.}
\end{figure}

\subsection{Signature of the Theory $T_{s,n}$}

We define the signature of $T_{s,n} = \left< \Sigma^{*}, \mathbb{N},
0_s, 1_s, \cdot, 0_n, 1_n, +, len, numstr, =_{s}, =_{n}, <_{n}
\right>$, where $\Sigma^{*}$ is the set of all string constants over a
finite alphabet $\Sigma$, $\mathbb{N}$ is the set of natural numbers,
$\cdot$ is the two-operand string concatenation function, $+$ is the
two-operand addition function for natural numbers, $len$ is a
function that takes a string and returns its length as a natural
number, $=_{s}$ is the equality predicate over strings, $=_{n}$ and
$<_{n}$ are the equality and less-than predicates over natural
numbers, and $numstr$ is a two-argument predicate such that $numstr(i,
s)$ is true for natural number $i$ and string $s$ if and only if $s$
is a valid binary representation of the natural number $i$.  By a
``valid binary representation'' we mean that $s$ does not contain any
characters other than `0' and `1', and interpreting the characters of
$s$ as the digits of a numeral in base 2, where the last character of
$s$ is the least significant digit, produces a natural number that is
equal to $i$.  (Hence we require that the alphabet $\Sigma$ contain
characters `0' and `1'.)
Note that the signatures of all theories
considered in this paper are countable.

\subsection{$L$-Semantics and the Canonical Model $\mathbb{M}$}\
\label{sec:canonicalmodel}

In this section, we provide semantics for the symbols in the language
$L$ via what we call a canonical model $\mathbb{M}$. We take the
finite alphabet $\Sigma$ to be the set $\{0,1\}$. The results
presented here can be easily extended to other finite alphabets. We
assume standard definitions for the terms \textit{interpretation of
  symbols} and \textit{model}~\cite{HodgesModelTheory}.

\noindent{\bf Universe of Discourse for symbols in $L$:} The universe
of discourse over which all symbols are interpreted is two-sorted
disjoint sets. The first set $\Sigma^*$, of sort str, is the set of all
finite-length strings over the alphabet $\Sigma = \{0,1\}$ including
the empty string (represented by $\epsilon$), and the second set $\mathbb{N}$, of sort
num, is the set of natural numbers starting from $0$.

\noindent{\bf Interpretation of Natural Number Variables, Constants,
  Functions and Predicates:} Variables of num sort range over the set
$\mathbb{N}$ of natural numbers, and constants represent corresponding
natural numbers. Note that all natural number constants are
represented as binary numbers, unless otherwise specified. The
function $+$ and the predicates $=_n, \leq$ have the standard
interpretations. (Multiplication by constant is also treated in the
standard way as a shorthand for repeated addition.)

\noindent{\bf Interpretation of String Variables, Constants,
  Functions, and Predicates:} String constants are interpreted as
a finite concatenation of letters $0$ and $1$ and correspond to appropriate
strings in $\Sigma^*$, and string variables range over values from
$\Sigma^*$. The string concatenation function is inductively defined
over elements of $\Sigma^*$ in the natural way.

\noindent{\bf What is meant by the Length of a String:} For a string
or a word, $w$, $len(w)$ denotes the length of $w$, or equivalently,
the (natural) number of characters from $\Sigma$ in the interpretation
of $w$ under a given assignment.

\noindent{\bf The Meaning of $numstr$ Predicate:} The $numstr$
predicate asserts that the interpretation of its string argument is a
valid binary representation of the natural number represented by its
numeric argument.  A string $s$ is a valid binary representation of a
natural number $i$ iff the following properties hold:

\begin{enumerate}
\item $s$ does not contain any characters in $\Sigma$ other than `0'
  and `1'.
\item Let $s[n]$ denote the $n$th character in $s$, where $n$ is a
  natural number between 0 and $len(s) - 1$ inclusive.  Let $s'[n]$
  denote the numeric value of $s[n]$, where $s'[n] = 1$ if $s[n]$ is
  `1', and $s'[n] = 0$ if $s[n]$ is `0'.  Then it must be the case
  that $\sum_{n=0}^{len(s) - 1} s'[n] 2^{len(s) - n - 1} = i$.
  (Here we expand the characters of $s$ into a binary representation
  of $i$.)
\end{enumerate}

\noindent{\bf The Meaning of Equality between String Terms:} For a
word equation of the form $t_1 = t_2$, we refer to $t_1$ as the left
hand side (LHS), and $t_2$ as the right hand side (RHS). Two string
terms are considered equal if their interpretations have the same
characters appearing in the same order, i.e., the LHS and RHS evaluate
to the same string in $\Sigma^*$ under the appropriate interpretation
for variables and constants in the LHS and RHS of the given equality.

\noindent{\bf The Canonical Model:} This interpretation
of $L$-symbols along with the universe of discourse defines the
canonical $L$-model. (An interpretation of a set of symbols in a
language $L$ along with universe of discourse is called an $L$-model.)

\subsection{Standard Logic Definitions}

Here we give some standard definitions such as assignment,
satisfiability, validity, consistency of an axiom system, and
completeness of a theory.

\noindent{\bf Assignments, Satisfiability, Validity, and Equisatisfiability:} Given an
$L$-formula $\theta$, an {\it assignment} for $\theta$ (with respect
to $\Sigma)$ is a map from the set of free variables in $\theta$ to
$\Sigma^* \cup \mathbb{N}$ (where string variables are mapped to
strings and natural number variables are mapped to numbers).  Given
such an assignment, $\theta$ can be interpreted as an assertion about
$\Sigma^*$ and $\mathbb{N}$.  If this assertion is true, then we say
that $\theta$ itself is {\it true} under the assignment.  If there is
some assignment which makes $\theta$ true, then $\theta$ is called
{\it satisfiable}.  An $L$-formula with no satisfying assignment is
called an {\it unsatisfiable} formula. We say two formulas $\theta, \phi$ are {\it equisatisfiable} if $\theta$ is satisfiable iff $\phi$
is satisfiable.  Note that this is a broad definition: equisatisfiable
formulas may have different numbers of assignments and, in fact, need
not even be from the same language. We say a formula is valid if it is
true under all possible assignments.

\noindent{\bf The Satisfiability Problem:} The {\it satisfiability
  problem} for a set $S$ of formulas is the problem of deciding
whether any given formula in $S$ is satisfiable or not. We say that
the satisfiability problem for a set $S$ of formulas is decidable if
there exists an algorithm (or \textit{satisfiability procedure}) that
solves its satisfiability problem. Satisfiability procedures must have
three properties: soundness, completeness, and termination. Soundness
and completeness guarantee that the procedure returns ``satisfiable"
if and only if the input formula is indeed satisfiable. Termination
means that the procedure halts on all inputs. In a practical
implementation, some of these requirements may be relaxed for the sake
of improved typical performance. Analogous to the definition of the
satisfiability problem for formulas, we can define the notion of the
\textit{validity problem} (aka decision problem) for a set $Q$ of
sentences in a language $L$. The validity problem for a set $Q$ of
sentences is the problem of determining whether a given sentence in
$Q$ is true under all assignments.

\noindent{\bf Logical Entailment:} We say that a set of sentences $C$
entails a sentence $\phi$, written as $C \models \phi$, if any
model $A$ of $C$ is also a model of $\phi$. We say a model $A$ is a
model of a set of sentences $C$, if all sentences of $C$ are true under
some assignments in $A$, written as $A \models C$.

\noindent{\bf Consistency of an Axiom System:} A set of $L$-sentences
may be designated as axioms. We say that an axiom system $A$ is
consistent if for any $L$-formula $\phi$, the axiom system $A$ does
not logically imply both a formula $\phi$ and its negation $\neg \phi$.

\noindent{\bf Theory, Closure of an Axiom System, Completeness of a
  Theory:} A set of $L$-sentences is referred to as a theory. The
closure $C$ of an axiom system $A$ is the set of sentences that are
logically implied by $A$, i.e., every model of $A$ is a model of the
set $C$. We say that a theory $T$ is complete if for every
$L$-sentence $\phi$, $T$ logically entails either $\phi$ or its
negation.

\section{Result 1: The Undecidability of the Satisfiability Problem for $T_{s,n}$}
\label{sec:undecidability}

In this section we prove that the satisfiability problem for the
first-order many-sorted quantifier-free theory $T_{s,n}$ over string
equations and linear arithmetic over natural numbers extended with
string length and a string-number conversion predicate is undecidable.

\subsection{The Theory of Power Arithmetic $T_{p}$, and B\"{u}chi's Results}

In this subsection, we present the syntax and semantics of the power
arithmetic theory $T_{p}$, and discuss B\"{u}chi's results for this
theory.

\subsubsection{Syntax, Semantics, and the Signature of Theory $T_{p}$}

We define the theory $T_{p}$ to have the signature $\left< \mathbb{N}, 0, 1, +, \pi,
<_{n}, =_{n} \right>$, where $\mathbb{N}$ is the set of natural
numbers, 0 and 1 are constants, $+$ is the two-operand addition
function, $<_{n}$ and $=_{n}$ are the two-operand less-than and
equality predicates, and $\pi$ is a three-operand predicate
\footnote{Representation of $\pi$ as a predicate is somewhat more
  natural given that string-number conversion is also represented as a
  predicate.}  defined as $\pi(p, x, y) \iff p = x \times 2^{y}$. Note
that we only consider the satisfiability problem over the
quantifier-free fragment of $T_p$ (equivalently the existential
closure over quantifier-free formulas).

\subsubsection{B\"{u}chi's Undecidability Result}
\label{sec:buchiresult}

Below we briefly present the necessary context for B\"{u}chi's
undecidability result for theory $T_p$.
We note that Lemmas~\ref{lem:robinson} and~\ref{lem:buchi1},
as well as the statement of Theorem~\ref{thm:TpUndecidable},
are adapted from~\cite{Buchi90} where they were originally presented.

\begin{Lemma}{\bf (Julia Robinson's divisibility lemma)} \label{lem:robinson}
  If $m \le n, l > 2n^2,$ and $l+m, l-m | l^2 - n$, then $m^2 = n$.
  (Refer to Lemma~5 in~\cite{Buchi90}.)
\end{Lemma}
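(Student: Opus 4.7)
The plan is to exploit the identity $l^{2} - m^{2} = (l+m)(l-m)$ to transfer the divisibility hypothesis from $l^{2}-n$ to $m^{2}-n$, and then to use the lower bound $l > 2n^{2}$ to force $m^{2}-n$ to vanish for size reasons.

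First I would observe that since $l+m$ divides $l^{2}-n$ and $l+m$ trivially divides $l^{2}-m^{2}$, subtracting gives that $l+m$ divides $m^{2}-n$; by the same argument $l-m$ divides $m^{2}-n$. Consequently $\mathrm{lcm}(l+m,\,l-m)$ divides $m^{2}-n$. The goal is then to show that this least common multiple is strictly larger than $|m^{2}-n|$ under the hypotheses, which forces $m^{2}-n=0$.

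Next I would bound the relevant gcd. Writing $d=\gcd(l+m,l-m)$, one has $d\mid (l+m)+(l-m)=2l$ and $d\mid(l+m)-(l-m)=2m$, hence $d\le 2m$ when $m\ge 1$. (The edge case $m=0$ should be disposed of separately: then $l+m=l-m=l$ divides $l^{2}-n$, so $l\mid n$, and since $l>2n^{2}\ge n$ this forces $n=0$, matching $m^{2}=n$.) For $m\ge 1$ we therefore have
\[
\mathrm{lcm}(l+m,l-m)\;=\;\frac{(l+m)(l-m)}{d}\;\ge\;\frac{l^{2}-m^{2}}{2m}\;\ge\;\frac{l^{2}-n^{2}}{2n},
\]
using $m\le n$. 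Plugging in $l>2n^{2}$ gives $l^{2}>4n^{4}$, so the right-hand side exceeds $(4n^{4}-n^{2})/(2n)=2n^{3}-n/2$, which is comfortably larger than $n^{2}$ for all $n\ge 1$ (the $n=0$ case is again trivial since it forces $m=0$).

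Finally I would combine this with the crude bound $|m^{2}-n|\le \max(m^{2},n)\le n^{2}$, which holds because $m\le n$. Then $\mathrm{lcm}(l+m,l-m) > n^{2} \ge |m^{2}-n|$, and since a positive integer cannot divide a nonzero integer of strictly smaller absolute value, we must have $m^{2}-n=0$, i.e.\ $m^{2}=n$. The only real subtlety I anticipate is the bookkeeping on the gcd and the handling of the corner cases $m=0$ and $n=0$; the main inequality chain itself is straightforward given the slack provided by $l>2n^{2}$.
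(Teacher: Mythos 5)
Your argument is correct. Note that the paper itself does not prove this lemma at all --- it simply cites it as Lemma~5 of B\"{u}chi's paper --- so there is no in-paper proof to compare against; what you have written is a valid self-contained proof along the standard Robinson lines. The key steps all check out: from $l\pm m \mid l^2-n$ and $l\pm m \mid l^2-m^2$ you correctly get $l\pm m \mid m^2-n$; the gcd bound $d \mid \gcd(2l,2m) \le 2m$ is right; the monotonicity of $(l^2-t^2)/(2t)$ in $t$ justifies replacing $m$ by $n$; and the final size comparison $\mathrm{lcm}(l+m,l-m) > n^2 \ge |m^2-n|$ forces $m^2-n=0$, with the corner cases $m=0$ and $n=0$ handled cleanly. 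One remark: the lcm/gcd machinery is heavier than necessary. Since $|m^2-n|\le n^2$ and $l-m \ge l-n > 2n^2-n \ge n^2$ for $n\ge 1$ (with $n\le 1$ checked directly), the single divisor $l-m$ already exceeds $|m^2-n|$ and forces the conclusion on its own; using both divisors via the lcm buys you nothing here, though it does no harm.
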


\begin{Lemma}{\bf (B\"{u}chi's Lemma)} \label{lem:buchi1}
  In $T_p = \left< \mathbb{N}, 0, 1, +, \pi \right>$ we can
  existentially define addition and multiplication
  on $\mathbb{N}$.  (Refer to Lemma~6 in~\cite{Buchi90}.)
\end{Lemma}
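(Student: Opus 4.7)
Since $+$ is already a primitive of $T_p$, the substantive content is an existential definition of multiplication in terms of $\{+, <, =, \pi\}$. My strategy proceeds in three layers. At the top, reduce multiplication to squaring through the polarization identity $2xy = (x+y)^2 - x^2 - y^2$, which is purely additive. In the middle layer, reduce squaring to divisibility and order via Lemma~\ref{lem:robinson}, which characterizes $m^2 = n$ as the existence of some $l$ with $m \le n$, $l > 2n^2$, $(l+m) \mid (l^2 - n)$, and $(l-m) \mid (l^2 - n)$. At the bottom layer, existentially define divisibility $a \mid b$ using $\pi$.

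The divisibility step is where $\pi$ does the essential work. The predicate $\pi(z,x,y) \iff z = x \cdot 2^y$ already provides multiplication by powers of two for free, and hence direct access to binary shifts and digit-wise reasoning about numbers. To define $a \mid b$ existentially, I would choose a block size $k$ with $2^k > a$, introduce existential witnesses for shifted copies $a \cdot 2^{ki}$ produced by $\pi$, and then assert that a sum of a subset of these shifted copies --- indexed by the $1$-bits of a hypothetical quotient $q$ --- equals $b$, with the non-overlap of blocks enforced by combining $\pi$ with $+$ and $<$. This encodes long multiplication in base $2^k$ and yields $b = aq$ without invoking an unavailable multiplication symbol; choosing $k$ large relative to the quantities involved (bounds themselves expressible through $\pi$) is what makes the construction go through.

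With divisibility in hand, Robinson's formula characterizes $m^2 = n$, but the apparent occurrences of $n^2$ and $l^2$ on its right-hand side must not smuggle in unrestricted multiplication. I would handle this by introducing auxiliary existential witnesses $u$ and $v$ that are simultaneously required to satisfy Robinson's conditions --- for instance, $u = n^2$ is pinned down by applying the same divisibility characterization to the pair $(n,u)$ --- turning what looks like a recursive definition into a single finite conjunction whose consistency forces $u$ and $v$ to be the intended squares. The main obstacle I anticipate is precisely this self-referential step: verifying that the combined existential formula admits exactly the intended witnesses, with no spurious solutions masquerading as squares. Careful use of the quantitative bound $l > 2n^2$ in conjunction with the $\pi$-controlled block sizes from the divisibility construction should break any residual ambiguity, and combining the three layers then yields the required existential definition of $z = x \cdot y$, which together with the already-available $+$ completes the lemma.
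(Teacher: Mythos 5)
The paper does not actually prove this lemma; it is imported verbatim as Lemma~6 of~\cite{Buchi90}, so there is no in-paper argument to compare against. Your three-layer skeleton (polarization to reduce multiplication to squaring, Robinson's divisibility lemma to reduce squaring to divisibility, and $\pi$ to supply divisibility) is the classically correct shape --- indeed it explains why the paper quotes Robinson's lemma immediately beforehand --- but two of your steps have concrete gaps as written.

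First, your treatment of the terms $n^2$ and $l^2$ in Robinson's conditions is an infinite regress, not a finite conjunction: pinning down $u = n^2$ ``by applying the same divisibility characterization to the pair $(n,u)$'' requires, inside that very characterization, a witness $l'$ with $l' > 2u^2$ and the value $l'^2$, i.e.\ fourth powers, then eighth powers, and so on. The standard escape, which your sketch does not identify, is to restrict the witness $l$ to powers of two: with $l = 2^{2s}$ one gets $l^2 = 2^{4s}$ directly from $\pi$, and the bound $l > 2n^2$ is secured without ever computing $n^2$ by additionally requiring $2^s \ge 2n+1$, whence $l = (2^s)^2 > 4n^2 \ge 2n^2$. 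Restricting $l$ this way preserves both directions of the characterization. Second, your existential definition of $a \mid b$ introduces ``witnesses for shifted copies $a \cdot 2^{ki}$'' indexed by the bits of the quotient $q$; since the number of such blocks grows with $q$, this is not a first-order formula, which must have a fixed finite quantifier prefix. The genuine technical content of B\"{u}chi's lemma is precisely here: the entire iterated computation (the selected shifts and the running partial sums) must be packed into a \emph{bounded} number of integers, with the per-position consistency conditions expressed uniformly via $\pi$, addition, and order. Your sketch does not engage with that uniformity problem. Note also the architectural tension: a working version of your block-coding step would existentially define the full graph $b = aq$, i.e.\ multiplication itself, at which point the polarization identity and Robinson's lemma become superfluous --- a sign that the division of labour you propose is not the one that actually carries the proof.
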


\begin{Theorem}{\bf (B\"{u}chi's Undecidability Theorem)} \label{thm:TpUndecidable}
  The existential theory of $T_p = \left< \mathbb{N}, 0, 1, +, \pi
  \right>$ is undecidable.  (Corollary~5 in~\cite{Buchi90}.)
\end{Theorem}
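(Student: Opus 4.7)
The plan is to derive Theorem~\ref{thm:TpUndecidable} as an immediate corollary of Lemma~\ref{lem:buchi1} (B\"uchi's Lemma) combined with the Matiyasevich--Robinson--Davis--Putnam (MRDP) theorem, which resolves Hilbert's tenth problem negatively: the existential theory of $\langle \mathbb{N}, 0, 1, +, \times, =\rangle$ is undecidable, because every recursively enumerable subset of $\mathbb{N}^k$ is Diophantine. The reduction is therefore from Diophantine satisfiability into the existential fragment of $T_p$.

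First I would invoke Lemma~\ref{lem:buchi1} to extract concrete existential $T_p$-formulas $\phi_{+}(x,y,z,\vec{w})$ and $\phi_{\times}(x,y,z,\vec{u})$ whose respective projective graphs are exactly the graphs of $+$ and $\times$ on $\mathbb{N}$; that is, $\mathbb{N} \models \exists \vec{w}\,\phi_{+}(x,y,z,\vec{w})$ iff $z = x+y$, and analogously for $\phi_{\times}$. (Addition is already in the signature of $T_p$, so only $\phi_{\times}$ is truly needed, but both are handy for the next step.) Given an input Diophantine equation $P(x_1,\ldots,x_n) = Q(x_1,\ldots,x_n)$, I would flatten $P$ and $Q$ into three-address form, introducing one fresh existentially quantified auxiliary variable per subterm and one copy of $\phi_{+}$ or $\phi_{\times}$ per arithmetic operation. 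Equating the two top-level flattened variables and existentially closing over all auxiliaries yields an existential $T_p$-sentence that is satisfiable over $\mathbb{N}$ iff the original Diophantine equation has a solution in $\mathbb{N}^n$.

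Consequently, a decision procedure for the existential fragment of $T_p$ would transfer, via this linear-time syntactic transformation, to a decision procedure for Hilbert's tenth problem, contradicting MRDP. Thus the existential theory of $T_p$ is undecidable, which is the content of the theorem.

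The main obstacle is not in the reduction just outlined, which is entirely mechanical once $\phi_{\times}$ is in hand, but rather inside Lemma~\ref{lem:buchi1}, which I would invoke as a black box. The heavy lifting there is the existential definition of multiplication from $+$ and $\pi$; Lemma~\ref{lem:robinson} (Robinson's divisibility lemma) is the engine of that construction, characterising squaring via two divisibility conditions of the form $(l+m) \mid (l^2 - n)$ and $(l-m) \mid (l^2 - n)$ for $l > 2n^2$, with divisibility itself encoded through the existence of an appropriate power of $2$ witnessed by $\pi$; multiplication is then recovered from squaring by the polarization identity $2xy = (x+y)^2 - x^2 - y^2$. Since the present statement only asserts undecidability, I would quote Lemma~\ref{lem:buchi1} and stop as soon as the MRDP reduction closes.
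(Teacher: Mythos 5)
Your proposal is correct and follows essentially the route the paper intends: the paper itself offers no proof, deferring entirely to Corollary~5 of~\cite{Buchi90}, but it stages Lemma~\ref{lem:robinson} and Lemma~\ref{lem:buchi1} immediately beforehand precisely so that the theorem follows from the existential definability of multiplication plus the MRDP resolution of Hilbert's tenth problem, which is exactly the reduction you carry out. Your flattening of Diophantine equations into three-address form with one existential witness block per operation is the standard way to make that step explicit, and your remark that the real work sits inside Lemma~\ref{lem:buchi1} (via Lemma~\ref{lem:robinson} and the polarization identity) accurately locates the content being cited as a black box.
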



\subsection{Proof Idea}

We present a sound, complete, and terminating (recursive) reduction
from the satisfiability problem for the theory of power arithmetic,
$T_{p}$, which is an extension of arithmetic over natural numbers with
a three-argument $\pi$ predicate defined as $\pi(p, x, y) \iff p = x *
2^{y}$, to the satisfiability problem of the theory $T_{s,n}$. This
theory $T_p$ (and its associated satisfiability problem) was shown by
B\"{u}chi to be undecidable~(in \cite{Buchi90}, as outlined in
Section~\ref{sec:buchiresult}).

As the theory $T_{s,n}$ already has arithmetic over natural numbers,
the only detail that is missing is an encoding of the $\pi$ predicate
into $T_{s,n}$. Recall that in
bit-vector arithmetic, an unsigned left shift corresponds to
multiplication by a power of 2.  Therefore, if we have a binary string
that represents the natural number $x$ and we concatenate this string
with a string of all zeroes of a given length $y$, the resulting
string will be the binary representation of $x * 2^{y}$. Once this
encoding is provided, then it is easy to see that any quantifier-free
formula in $T_p$ can be reduced equisatisfiably to a quantifier-free
formula in $T_{s,n}$.

\subsection{The Undecidability Theorem}

\begin{Theorem}
  The satisfiability problem for the theory $T_{s,n}$ is undecidable.
\end{Theorem}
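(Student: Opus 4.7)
The plan is to reduce satisfiability in the quantifier-free fragment of the power arithmetic theory $T_p$ to satisfiability in $T_{s,n}$, and then invoke B\"uchi's undecidability theorem (Theorem~\ref{thm:TpUndecidable}). Since the linear arithmetic fragment of $T_p$ (namely $0$, $1$, $+$, $=_n$, $<_n$) is already literally present inside $T_{s,n}$, the only nontrivial piece of the reduction is encoding the three-place predicate $\pi(p,x,y) \iff p = x \cdot 2^y$ using word equations, the $len$ function, and $numstr$.

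The encoding I have in mind exploits the bit-vector identity that an unsigned left shift by $y$ equals multiplication by $2^y$. Concretely, I would translate each atom $\pi(p,x,y)$ into the $T_{s,n}$-formula
\[
  \psi_\pi(p,x,y) \;\equiv\; numstr(x, s_x) \,\wedge\, numstr(p, s_p) \,\wedge\, numstr(0, z) \,\wedge\, len(z) = y \,\wedge\, s_p = s_x \cdot z,
\]
where $s_x, s_p, z$ are fresh string variables introduced for this occurrence of $\pi$. Because $T_{s,n}$ is quantifier-free, these fresh free variables are implicitly existentially quantified, which is exactly what the reduction requires. The numeric variables $p, x, y$ are shared between $\phi$ and its translation; arithmetic atoms pass through unchanged; Boolean structure is preserved. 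The resulting translation is therefore compositional and computable in linear time in the size of $\phi$.

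Correctness then splits into the two standard directions. If an assignment satisfies the source $T_p$-formula, then for each atom $\pi(p,x,y)$ I pick $s_x$ to be any binary representation of $x$, take $z$ to be the string of $y$ zero characters, and set $s_p := s_x \cdot z$; the resulting digit sum for $s_p$ is $x \cdot 2^y = p$, which witnesses $\psi_\pi$. Conversely, if the translated formula is satisfied, then $numstr(0, z)$ together with $len(z) = y$ forces $z$ to be a string consisting of exactly $y$ zero bits, so $s_p = s_x \cdot z$ encodes $x \cdot 2^y$, and $numstr(p, s_p)$ then forces $p = x \cdot 2^y$. Combining these observations with the fact that the rest of the formula is copied verbatim yields equisatisfiability.

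The step I expect to need the most care is the small lemma that $numstr(0, z) \wedge len(z) = y$ really pins $z$ down to a padding of exactly $y$ zero characters. This uses the semantics of $numstr$ in Section~\ref{sec:canonicalmodel}: the string $z$ must contain only the characters $0$ and $1$, and its digit sum $\sum_{n=0}^{len(z)-1} z'[n]\,2^{len(z)-n-1}$ equals $0$ only when every $z'[n]$ is $0$. Once this observation is established, the reduction is evidently sound, complete, and terminating, and Theorem~\ref{thm:TpUndecidable} immediately transfers undecidability from $T_p$ to $T_{s,n}$.
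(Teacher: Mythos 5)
Your reduction is essentially the paper's own proof: both pass the linear-arithmetic part of $T_p$ through unchanged and encode $\pi(p,x,y)$ as a left shift, i.e.\ concatenating a binary representation of $x$ with a length-$y$ block of zeroes and asserting via $numstr$ that the result represents $p$. The only difference is how the block $z$ is forced to be all zeroes: the paper uses the word equation $``0" \cdot z = z \cdot ``0"$, whereas you use $numstr(0,z)$. These are interchangeable except at one edge case you should check: when $y=0$ you need $z=\epsilon$, and whether $numstr(0,\epsilon)$ holds is delicate --- the canonical model's digit-sum definition makes the empty sum equal to $0$, so it holds there, but the paper's own axiomatization explicitly asserts $\lnot numstr(i,\epsilon)$ for all $i$, under which your translation of $\pi(p,x,0)$ would be unsatisfiable even though $\pi(x,x,0)$ is true. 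The paper's word-equation formulation sidesteps this because $``0"\cdot\epsilon = \epsilon\cdot``0"$ holds trivially; either adopt that atom or handle $y=0$ separately (e.g.\ by disjoining the case $y=0 \wedge p=x$).
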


\begin{proof}

  We prove this result via a recursive reduction from the theory $T_p$
  (B\"{u}chi's power arithmetic) to theory $T_{s,n}$, i.e., any
  quantifier-free formula in $T_p$ can be equisatisfiably reduced to a
  quantifier-free formula in $T_{s,n}$. Thus, if the satisfiability
  problem for $T_{s,n}$ is decidable then so is the satisfiability
  problem for $T_p$. By B\"{u}chi's theorem~\cite{Buchi90} the
  satisfiability problem for $T_p$ is undecidable, and hence so is the
  satisfiability problem for $T_{s,n}$.

\noindent{\bf The Reduction from $T_p$ to $T_{s,n}$.} We reduce each
constant, function, predicate, and atomic formula of $T_{p}$ to
$T_{s,n}$ by applying the following rules recursively over the input formula:

\begin{enumerate}

\item Each natural number in $\mathbb{N}$ is represented directly as a
  constant in $T_{s,n}$.

\item Variables in $T_{p}$ are represented directly as variables of
  numeric sort in $T_{s,n}$.

\item Addition of two terms $t_1 + t_2$ is represented directly as
  addition over natural numbers, $t_1 + t_2$, in $T_{s,n}$.

\item Equality of terms in $T_{p}$ is represented directly via a
  recursive reduction as equality $t_1 =_{n} t_2$ of terms of numeric
  sort.

\item The less-than predicate in $T_{p}$ is represented directly as
  comparison of natural numbers, $t_1 <_{n} t_2$.

\item The predicate $\pi(p, x, y)$ is expressible as follows: $\exists
  z:str, \exists x_{s}:str : \, (``0" \cdot z = z \cdot ``0" \land
  len(z) = y \land numstr(p, x_{s} \cdot z) \land numstr(x, x_{s})
  )$. The interpretation of the $\pi$ predicate is $p = x\times
  2^y$. The variables $z$ and $x_{s}$ are string variables, and $z$ is
  a string of the ``0'' character of length equal to y. The $x_s$
  variable is the string binary representation of the natural number
  $x$. The concatenation of $x_s$ followed by $z$ is a binary
  representation of $p$. It is easy to verify that the given formula
  over free numeric variables $x,y,p$ is satisfiable iff $\pi(p,x,y)$
  is satisfiable.
\end{enumerate}

\noindent{The reduction can easily be extended to arbitrary
  quantifier-free formulas in $T_p$. It is easy to verify that the
  reduction is sound, complete, and terminating for all
  inputs.}
\end{proof}

\subsection{Discussion}

Recall that the satisfiability problem for the theory of
quantifier free string equations with string length remains
open. Knowing whether that theory is decidable would be of value in
many program analysis applications. The theory $T_{s,n}$ we consider
here is arguably more directly relevant to program analysis since many
state-of-art solvers implement exactly this theory, as the extension
of string-number conversion allows it to model similar operations
which are present in almost all programming languages that have a data
structure for strings. Examples of programming language
operations/functions that could be modelled with the string-numeric
conversion predicate include JavaScript's \texttt{parseInt} and
\texttt{toNumber} methods, which perform integer-string and
string-integer conversion.

\section{Result 2: Expressibility}
\label{sec:definability}

In this section we establish that the $\pi(p,x,y)$ and $numstr$
predicates are expressible in terms of each other. 
We define a new theory $T_\pi$ (different from $T_p$), which is the same
as $T_{s,n}$ except that $numstr$ is removed and replaced by the
$\pi(p,x,y)$ predicate. From the previous section it is clear that
any formula involving the $\pi(p,x,y)$ predicate can be reduced to some
formula in $T_{s,n}$ using some Boolean combination of $numstr$
predicate, string equations, and length function.
This shows us that a reduction exists from $T_\pi$ to $T_{s,n}$.  
We now show that a reduction in the opposite direction exists; that is,
the $numstr$ predicate can be expressed in terms of quantified
formulas over the $\pi(p,x,y)$ predicate, word equations, and length
function.

The value of these two recursive reductions is that it suggests that
the $\pi$ predicate is expressible using string equations and length
function iff $numstr$ is. Expressibility results are very useful tools
in constructing reductions, distinguishing the expressive powers of
various theories, and establishing (un)-decidability
results. Additionally, our expressibility results suggest that the
$numstr$ predicate is much more complex, both from a theoretical and a
practical point of view, than it seems at first glance.

\begin{Definition}
  A predicate $P$ is \textbf{expressible} in some theory $T$ having
  language $L_{T}$ if there exists an $L_{T}$-formula $\phi(x_1,
  \hdots, x_n)$ such that for all interpretations $m_1, \hdots, m_n$
  of $x_1, \hdots, x_n$ allowed by $T$ and such that $\phi(m_1,
  \hdots, m_n)$ is well-sorted, we have that $P(m_1, \hdots, m_n)$ is
  true iff $\phi(m_1, \hdots, m_n)$ is true.
\end{Definition}

The fact that $\pi(p,x,y)$ is expressible in terms of $numstr(i,s)$ in
the theory $T_{s,n}$ follows immediately from the reduction from $T_p$
to $T_{s,n}$ used to establish the undecidability theorem
in the previous section. We only have to show the reverse direction,
i.e., that $numstr(i,s)$ is expressible in terms of $\pi(p,x,y)$.
\footnote{Note that we do not present a reduction from $T_{s,n}$ to
  $T_p$.  However, we conjecture that one exists, due to the
  possibility of mapping the countably infinite set of string
  constants onto the countably infinite set of natural numbers and
  then constructing string functions and predicates as operators over
  natural numbers.}

\begin{Theorem}
  $numstr(i,s)$ is expressible in terms of $\pi(p,x,y)$ in $T_\pi$.
\end{Theorem}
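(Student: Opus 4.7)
The plan is to write $numstr(i,s)$ as a quantified $L$-formula asserting bit-by-bit that the characters of $s$ coincide with the binary digits of $i$ in its length-$len(s)$ representation. The only primitive I need beyond word equations, length, and arithmetic is $\pi$, which will be used to pull out the relevant powers of $2$.

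First, I would build an auxiliary sub-formula $\mathrm{BIT}_b(i,j)$, for each $b \in \{0,1\}$, asserting ``the $j$-th least-significant bit of $i$ equals $b$''. Using $\pi$ both to force $m = 2^j$ and to express that a witness is a multiple of $2^{j+1}$, this becomes $\exists m, H', h, r.\; \pi(m,1,j) \wedge \pi(h, H', j+1) \wedge r < m \wedge i = h + b \cdot m + r$, which stays within $T_\pi$ since $b \in \{0,1\}$ is a numeric constant and so $b \cdot m$ is not a variable product. Correctness amounts to the routine fact that $i$ has a unique decomposition $i = 2^{j+1}q + b \cdot 2^j + r$ with $0 \le r < 2^j$, which the $\pi$-clauses and the bound $r<m$ encode directly.

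Next I would express $numstr(i,s)$ as the conjunction of two clauses. A first clause, $\exists M.\; \pi(M,1,len(s)) \wedge i < M$, enforces $i < 2^{len(s)}$ so that $i$ carries no bits above those encoded by $s$. A second clause universally quantifies over string decompositions $s = u \cdot c \cdot v$ with $len(c) = 1$, asserting the two implications $(c =_s ``1") \rightarrow \mathrm{BIT}_1(i, len(v))$ and $(c =_s ``0") \rightarrow \mathrm{BIT}_0(i, len(v))$. The key observation is that the position $len(u)$ of character $c$ within $s$ corresponds to LSB-index $len(v) = len(s) - len(u) - 1$ in the binary representation of $i$, so the conjunction states exactly that every character of $s$ matches the corresponding binary digit of $i$; together with the first clause this is precisely the semantic definition of $numstr(i,s)$. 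The edge case $s = \epsilon$ is handled automatically, as the first clause collapses to $i = 0$ and the universal second clause is vacuous, matching $numstr(0, \epsilon)$.

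The main obstacle, which I expect to be only book-keeping, is aligning the two indexing conventions: characters of $s$ are implicitly indexed MSB-first via $len(u)$, while bits of $i$ are indexed LSB-first in $\mathrm{BIT}_b$; the identification through $len(v)$ resolves this. No deeper difficulty is anticipated: once $\pi$ is used to express divisibility by powers of $2$, the definition of $numstr$ reduces cleanly to arithmetic combined with word-equation-based prefix/suffix decomposition of $s$.
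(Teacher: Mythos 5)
Your proposal is correct and follows essentially the same route as the paper: both bound $i < 2^{len(s)}$ via $\pi(\cdot,1,len(s))$, extract bit $t$ of $i$ through the decomposition $i = h\cdot 2^{t+1} + x\cdot 2^{t} + l$ with $l < 2^{t}$ enforced by $\pi$, and locate the corresponding character by splitting $s = u \cdot c \cdot v$ with $len(c)=1$ and matching on $len(v)$. The only difference is that you state the bit-matching condition as a universally quantified implication while the paper negates the existence of a mismatch witness, which is a trivial logical rephrasing.
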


\begin{proof}
  We represent $numstr(i,s)$ as a formula that asserts the
  non-existence of a witness for one of two kinds of error in the
  conversion. The first kind of error relates to the maximum possible
  value of $i$. Suppose $s$ is a binary string of length $n$.  Then
  $s$ cannot represent a natural number greater than or equal to
  $2^n$.  The second error is a discrepancy between the binary
  representation of $i$ and the binary string $s$.  To check bit $t$
  of the number $i$, decompose $i$ into $h2^{t+1} + x2^{t} + l$ where
  $x$ is the $t$-th bit of $i$ and so $x = 0 \lor x = 1$, and $l$ is
  the numeric representation of bits $t-1$ through 0 and so $l <
  2^{t}$.  Then if $x = 0$ and $s[len(s) - 1 - t] = ``1"$, or if $x =
  1$ and $s[len(s) - 1 - t] = ``0"$, there is an error.  This gives
  us the following sentence:

  \begin{align*}
    numstr(i, s) \iff & \forall n\, p\, t\, h\, p_{h}\, x\, p_{x}\, l\, l_{u}\, s_h\, s_x\, s_l: \\
    & \lnot (len(s) = n \land \pi(p, 1, n) \land i \ge p) \\
    & \land \lnot (\pi(p_{h}, h, t+1) \land \pi(p_{x}, x, t) \\
    & \land i = p_{h} + p_{x} + l \land \pi(l_{u}, 1, t) \land l < l_u \\
    & \land s = s_h \cdot s_x \cdot s_l \land len(s_l) = t \land len(s_x) = 1 \\
    & \land ((x = 0 \land s_x = ``1") \lor (x = 1 \land s_x = ``0")))
  \end{align*}

  We can apply this rule recursively to the input formula, along with similar rules to the ones presented previously,
  to obtain a reduction from $T_{s,n}$ to $T_{\pi}$.
\end{proof}

\section{Result 3: Axiomatization $\Gamma$ of the Language $L$}
\label{sec:soundness}

In this section, we present a consistent
axiomatization $\Gamma$ for the functions and predicates of the
language $L$ (as presented earlier in Section~\ref{sec:prelim}).

\subsection{The Axiomatization $\Gamma$}
\label{sec:axioms}

We introduce an axiom system $\Gamma$ for the language $L$.
For the sake of readability, we choose not to specify the
sorts of various terms if they are clear from context.

\subsubsection{Axioms of Linear Arithmetic over the Natural Numbers}

The following axioms follow from the ones for Presburger
arithmetic. Note that both Presburger arithmetic and the linear
arithmetic as part of $\Gamma$ include only the addition symbol, and
do not have full multiplication. (Multiplication by constants is
simply a short-hand for repeated addition up to a known constant
bound.)

\begin{enumerate}

\item $0 \neq 1$
  
\item $\forall x : \lnot (0 = x + 1)$

\item $\forall x \exists y : x \ne 0 \to : y + 1 = x$ 

\item $\forall x \, y : \lnot (x < y \land y < x + 1)$ 

\item $\forall x\, y : x + y = y + x$ 

\item $\forall x\, y\, z : (x + y = x + z) \to (y = z)$ 

\item $\forall x\, y : x + 1 = y + 1 \to x = y$

\item $\forall x : x + 0 = x$

\item $\forall x\, y : x + (y + 1) = (x + y) + 1$

  
\item $\forall x\, y \exists c : x < y \to \lnot (c = 0) \land x + c = y$

\item $\exists c \forall x \, y: \lnot (c = 0) \land x + c = y \to x < y $

\end{enumerate}

\subsubsection{Axioms of Equality for Strings and Natural Numbers}

It is assumed that the equality predicate for both string and numeric
sorts is reflexive, symmetric, and transitive. In addition, we have
the following axiom recursively defined over string terms. Below we
present the axiom for string constants over the alphabet $\Sigma$.

\begin{enumerate}[resume]

\item $\forall A \, B : A = B \to len(A) = len(B)$

\end{enumerate}

\subsubsection{Axioms of Concatenation}

Concatenation is associative, but not commutative.

\begin{enumerate}[resume]

\item $\forall x : x \cdot \epsilon = \epsilon \cdot x = x$

\item $\forall xyz : x \cdot (y \cdot z) = (x \cdot y) \cdot z$

\end{enumerate}

\subsubsection{Axioms of the $len$ Function}

\begin{enumerate}[resume]

\item $\forall x : len(x) = 0 \iff x = \epsilon$

\item $\forall x : len(x) = 1 \to \bigvee_{c \in \Sigma} x = c$

\item $\forall x\, y : len(x \cdot y) = len(x) + len(y)$

\item $\forall c \in \Sigma : len(c) = 1$

\end{enumerate}
  
\subsubsection{Axioms of $numstr$}

The axioms for the $numstr$ predicate essentially allow us to define a
natural mapping between natural numbers, represented in binary, and
strings over $\Sigma$.

\begin{enumerate}[resume]

\item $\forall i : \lnot numstr(i, \epsilon)$

\item $numstr(0, ``0")$

\item $numstr(1, ``1")$

\item $\forall s\, i : len(s) = 1 \land s \ne ``0" \land s \ne ``1" \to \lnot numstr(i, s)$
  
\item $\forall i\, x\, z : numstr(i, x) \land ``0"z = z``0" \to numstr(i, zx)$

\item $\forall i\, x\, z : numstr(i, zx) \land ``0"z = z``0" \land z \ne \epsilon \land x \ne \epsilon \to numstr(i, x)$


\item $\forall x\, y\, z : (\exists u\, v : numstr(u, y) \land numstr(v, z)) \to (numstr(x, yz) \iff x = u_b v_b)$, where $u_b$ and $v_b$ are the binary
  digits of $u$ and $v$ respectively. (This describes distribution of $numstr$ over a concatenation.)


\item $\forall x\, y\, z \exists u \, v \, w : numstr(x + y, z) \to : len(u) = x \land len(v) = y \land w = uv \land numstr(len(w), z)$

\item $\exists u\, v\, w \forall x \, y \, z: len(u) = x \land len(v) = y \land w = uv \land numstr(len(w), z) \to numstr(x + y, z)$

\end{enumerate}

\subsection{Relationship between $T_{\Gamma}$ and $T_{s,n}$}

We refer to the set of sentences logically entailed by the axiom
system $\Gamma$ as the theory $T_{\Gamma}$. Note that this set
contains sentences with arbitrary quantifiers in them. We assume that
sentences are always written in prenex normal form.

The set $T_{s,n}$ is a set of quantifier-free $L$-formulas. As discussed
before, when we use the term ``quantifier-free'' formulas, we mean
that each free variable is implicitly existentially quantified and
there are no other explicit quantifiers in the formula. When the
formulas in $T_{s,n}$ are existentially quantified, we get the same set
of sentences implied by $\Gamma$ that have a single set of existential
quantifiers in prenex normal form. We also call this the existential
fragment of $T_{\Gamma}$.

\subsection{Consistency of $\Gamma$}

\begin{Theorem}
  The axiom system $\Gamma$ presented in Section~\ref{sec:axioms} is consistent.
\end{Theorem}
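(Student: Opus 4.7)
The plan is to establish consistency by the standard model-theoretic route: exhibit a single $L$-structure in which every axiom in $\Gamma$ is true, and then appeal to soundness of first-order logic. Since no structure can simultaneously satisfy a formula and its negation, no contradiction can be derived from $\Gamma$, which is precisely the notion of consistency defined earlier in Section~\ref{sec:prelim}.

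First I would take the canonical model $\mathbb{M}$ described in Section~\ref{sec:canonicalmodel} as the witnessing structure: two-sorted universe $\Sigma^{*} \cup \mathbb{N}$, string concatenation for $\cdot$, natural-number addition for $+$, the standard $=_{n}$ and $<_{n}$, the standard $len$ function, and the intended ``valid binary representation'' semantics for $numstr$. I would then walk through $\Gamma$ group by group and verify validity in $\mathbb{M}$. The arithmetic axioms (1)--(11) hold because they form a subfragment of Presburger arithmetic, of which $\langle \mathbb{N}, 0, 1, +, <\rangle$ is the standard model; the congruence axiom (12) for $len$ is immediate since $len$ is a well-defined function on $\Sigma^{*}$; the concatenation axioms (13)--(14) hold because $\Sigma^{*}$ is a free monoid with identity $\epsilon$; and the length axioms (15)--(18) record elementary properties of $len$ over $\Sigma = \{0,1\}$.

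The real work, and the main obstacle, lies in verifying the $numstr$ axioms (19)--(26). Axioms (19)--(22) are essentially definitional: $\epsilon$ represents no number, ``0'' and ``1'' represent $0$ and $1$, and length-one strings outside $\{0,1\}$ represent nothing (vacuous under our alphabet). The more substantive cases are axioms (23)--(24), which encode invariance under a block of leading zeros: the side condition $``0'' \cdot z = z \cdot ``0''$ forces $z$ to be a power of the character ``0'' by the standard word commutation argument, so prepending or stripping such a block from a binary representation does not change the denoted natural number. For axiom (25) I would expand the definition of $numstr$ as the weighted binary sum $\sum_{n} s'[n]\, 2^{len(s) - n - 1}$ and check that the value carried by $y \cdot z$ is exactly the digit-wise concatenation of the binary expansions of $u$ and $v$. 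Axioms (26)--(27), which demand witness strings of specified lengths, reduce to the fact that $\Sigma^{*}$ contains strings of every natural-number length, so the required existential witnesses can always be produced.

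Once every axiom is verified in $\mathbb{M}$, consistency of $\Gamma$ follows from soundness of classical first-order deduction: if $\Gamma$ derived both $\phi$ and $\lnot \phi$, then $\mathbb{M}$ would satisfy both, which is impossible. I anticipate no difficulty beyond patient case analysis; the only genuine subtlety is in axioms (23)--(25), where one must carefully distinguish the string ``0'' (sort $str$) from the number $0$ (sort $num$), an ambiguity that is eliminated automatically by the many-sortedness of $L$.
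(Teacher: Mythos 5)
Your proposal is correct and follows essentially the same route as the paper: exhibit the canonical model $\mathbb{M}$ of Section~\ref{sec:canonicalmodel}, verify each group of axioms in it, and conclude consistency from the existence of a model. Your treatment of axioms (23)--(25), in particular the word-commutation argument forcing $z$ to be a block of zeros, is if anything slightly more explicit than the paper's own verification.
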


\begin{proof}
  It is well known that a theory or axiom system is consistent if it
  has a model~\cite{HodgesModelTheory}.  We prove consistency by
  showing that the structure established in
  Section~\ref{sec:canonicalmodel} is in fact a model of $\Gamma$. The
  remainder of the proof is structured in sections corresponding to
  those in the description of $\Gamma$.

  \begin{enumerate}

  \item \textbf{Axioms of arithmetic over natural numbers:} These are
   standard axioms for natural number arithmetic.  Since we choose
   $\mathbb{N}$ to model numeric terms, it follows that these axioms
   are true over the natural numbers.

 \item \textbf{Axioms of equality for strings and natural numbers:}
   This axiom states that if two strings $A$ and $B$ are equal, then
   $A$ and $B$ have the same length, in addition to the standard
   axioms of equality. Our model of string terms states that two
   strings are equal if they have the same characters appearing in the
   same order, and that the length of a string is the natural number
   of characters in that string.  It follows that if two strings are
   equal, then they have the same characters, and therefore have the
   same length.

 \item \textbf{Axioms of concatenation:} The first axiom states that
   concatenating any string with the empty string, on either side,
   produces a result equal to the original string.  Our model
   represents the result of concatenating $A$ and $B$ as a string
   having all of $A$'s characters (in the same order) followed by all
   of $B$'s characters (also in the same order). If one of $A$ or $B$
   is empty, it follows that the resulting string has the same
   characters and in the same order as the other string, and therefore
   the two are equal.

   The second axiom states that string concatenation is associative.
   Suppose strings $X, Y, Z$ are composed of characters $x_1 \hdots
   x_u$, $y_1 \hdots y_v$, $z_1 \hdots z_w$ respectively. Then by
   definition of concatenation in our model, we have:

  \begin{align*}
    y \cdot z & = y_1 \hdots y_v z_1 \hdots z_w \\
    x \cdot (y \cdot z) & = x_1 \hdots x_u y_1 \hdots y_v z_1 \hdots z_w \\
    x \cdot y & = x_1 \hdots x_u y_1 \hdots y_v \\
    (x \cdot y) \cdot z & = x_1 \hdots x_u y_1 \hdots y_v z_1 \hdots z_w \\
    x \cdot (y \cdot z) & = (x \cdot y) \cdot z
  \end{align*}

  Therefore the axiom holds in this model.

\item \textbf{Axioms of the Length Function:} The first axiom states
  that the only string having length 0 is the empty string $\epsilon$,
  which follows trivially from the definition of the set of string
  constants $\Sigma^{*}$.

  The second axiom states that the length of the concatenation of $A$
  and $B$ is equal to the sum of the lengths of $A$ and $B$ taken
  separately.  Our model represents the result of concatenating $A$
  and $B$ as a string having all of $A$'s characters (in the same
  order) followed by all of $B$'s characters (also in the same
  order). Since characters are conserved by this process, it follows
  that the resulting string has length equal to the sum of the lengths
  of $A$ and $B$.

  The third axiom states that all single-character strings have length
  1, which holds trivially.

\item \textbf{Axioms of $numstr$ string-numeric conversion predicate:}
  The first four axioms state some basic properties of string-number
  conversion: $\epsilon$ is not the binary representation of any
  number, ``0'' is the binary representation of 0, ``1'' is the binary
  representation of 1, and single-character strings that are not ``0''
  or ``1'' are not the binary representation of any number. These
  axioms are true by inspection.

  The fifth and sixth axioms show that leading zeroes can be added to and removed from
  a string without changing its value. 
   We can show that this is true in our model by demonstrating that if $y$ is a binary string and $z$ is a string of all zeroes,
   the binary expansions of $y$ and $zy$, denoted $y_b$ and $(zy)_{b}$ respectively, both represent the same natural number:
  
   \begin{align*}
     y_{b} & = y[0] 2^{length(y)-1} + y[1] 2^{length(y) - 2} + \hdots \\
     & + y[length(y) - 2] 2^{1} + y[length(y)-1] 2^{0} \\
     (zy)_{b} & = (zy)[0] 2^{length(zy) - 1} + (zy)[1] 2^{length(zy) - 2} + \hdots \\
     & + (zy)[length(z)-1] 2^{length(zy) - length(z) - 2} \\
     & + (zy)[length(z)] 2^{length(zy) - length(z) - 1} + \hdots \\
     & + (zy)[length(zy) - 1] 2^{length(zy) - length(zy)} \\
     (zy)[0] & = 0 \\
     (zy)[1] & = 0 \\
     \vdots & \\
     (zy)[length(z)-1] & = 0 \\
     (zy)_{b} & = 0 + 0 + \hdots + 0 \\
     & + (zy)[length(z)] 2^{length(zy) - length(z) - 1} + \hdots \\
     & + (zy)[length(zy) - 1] 2^{length(zy) - length(zy)} \\
     (zy)[length(z)] & = y[0] \\
     (zy)[length(z) + 1] & = y[1] \\
     \vdots & \\
     (zy)[length(z) + length(y) - 1] & = y[length(y) - 1] \\
     length(zy) - length(z) & = length(y) \\
     (zy)_{b} & = y[0] 2^{length(y) - 1} + \hdots + y[length(y) - 1] 2^{0} \\
     & = y_{b}
   \end{align*}
  
  Hence adding or deleting leading zeroes has no effect on what number is represented by a given binary string, and so these axioms hold.
  
  The seventh axiom holds if we assume that all numbers are written in binary;
  concatenating the binary digits of two numbers is equivalent to
  concatenating the string representations of those numbers.
  
%
%

  The eighth axiom illustrates how to perform string-number conversion
  on an addition term $x + y$.  It suffices to show
  that $len(w) = x + y$:

  \begin{align*}
    w & = uv \\
    len(w) & = len(u) + len(v) \\
    & = x + y
  \end{align*}

  \end{enumerate}
  
This completes the proof.
\end{proof}

\section{Result 4: Incompleteness of the Theory $T_{\Gamma}$}
\label{sec:incompleteness}

We first state a number of useful definitions and theorems related to
completeness of first-order theories from the standard model theory
literature~\cite{HodgesModelTheory}.

\begin{Definition}
A first-order theory $T$ in language $L$ is \textbf{complete} if for
all $L$-formulas $\phi$, exactly one of $\phi$ and $\lnot \phi$ is a
consequence of $T$.
\end{Definition}

\begin{Definition}
Two models $A, B$ of a first-order theory are \textbf{elementarily
  equivalent} if for all first order $L$-sentences $\phi$, $A \vDash
\phi \iff B \vDash \phi$.
\end{Definition}

\begin{Theorem}
\label{thm:completeIffEquivalent}
A first-order theory $T$ is complete iff all of its models
are elementarily equivalent~\cite{HodgesModelTheory}.
\end{Theorem}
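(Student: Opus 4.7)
The plan is to prove the biconditional directly from the semantic definitions of completeness, entailment, and elementary equivalence, using the paper's own convention that $T \models \phi$ means every model of $T$ is a model of $\phi$. Both directions reduce to an easy case split on $\phi$ versus $\neg\phi$, so I do not anticipate any genuine obstacle; the work is almost entirely notational.

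For the forward direction, I would assume $T$ is complete and pick two arbitrary models $A, B$ of $T$. Given any $L$-sentence $\phi$, completeness supplies that either $T \models \phi$ or $T \models \neg\phi$. In the first case both $A \models \phi$ and $B \models \phi$; in the second case both $A \models \neg\phi$ and $B \models \neg\phi$, i.e. $A \not\models \phi$ and $B \not\models \phi$. Either way, $A \models \phi \iff B \models \phi$, so $A$ and $B$ are elementarily equivalent.

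For the converse I would argue contrapositively. Suppose $T$ is not complete, so there is some $L$-sentence $\phi$ with $T \not\models \phi$ and $T \not\models \neg\phi$. By the definition of $\models$, $T \not\models \phi$ means there exists a model $A$ of $T$ with $A \not\models \phi$, equivalently $A \models \neg\phi$; similarly $T \not\models \neg\phi$ yields a model $B$ of $T$ with $B \models \phi$. Then $A$ and $B$ disagree on the sentence $\phi$, witnessing a failure of elementary equivalence among models of $T$. The only subtle step is the passage from ``$T$ does not entail $\psi$'' to ``some model of $T$ falsifies $\psi$,'' which is immediate from the semantic definition of entailment adopted earlier in the paper, and this is the closest thing to an obstacle in the argument.
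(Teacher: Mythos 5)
The paper does not actually prove this theorem---it is stated as a standard fact with a citation to Hodges---so there is no in-paper argument to compare against; your proof is the standard textbook argument and is correct. The one caveat worth flagging is that the paper's own definition of completeness demands that \emph{exactly} one of $\phi$, $\neg\phi$ be a consequence of $T$, so ``$T$ is not complete'' also covers the case where \emph{both} are consequences; semantically that happens precisely when $T$ has no models, in which case all models of $T$ are vacuously elementarily equivalent and the biconditional as literally stated fails. This degenerate case is conventionally excluded by assuming $T$ is consistent (which is how your contrapositive implicitly reads the negation of completeness), and your argument is complete once that assumption is made explicit.
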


\noindent{We are now in a position to prove the following result.}

\begin{Theorem}
$T_{\Gamma}$ is incomplete.
\end{Theorem}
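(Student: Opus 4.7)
The plan is a classical computability argument that leverages the undecidability of $T_{s,n}$-satisfiability (Section~\ref{sec:undecidability}): the axiom set $\Gamma$ is finite, so $T_{\Gamma}$ is recursively enumerable; if $T_{\Gamma}$ were also complete, it would be decidable, and that decidability would transfer to the $T_{s,n}$-satisfiability problem on the canonical model $\mathbb{M}$, producing a contradiction.

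First, I would observe that the canonical model $\mathbb{M}$ of Section~\ref{sec:canonicalmodel} satisfies every axiom of $\Gamma$ (this is exactly the content of the consistency theorem just proved), so $\mathbb{M} \models T_{\Gamma}$. Assume for contradiction that $T_{\Gamma}$ is complete. Then for every $L$-sentence $\sigma$ either $\sigma \in T_{\Gamma}$ or $\neg \sigma \in T_{\Gamma}$; since $\mathbb{M}$ is a model of $T_{\Gamma}$ this forces $\sigma \in T_{\Gamma}$ iff $\mathbb{M} \models \sigma$, i.e., $T_{\Gamma} = \mathrm{Th}(\mathbb{M})$.

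Second, since $\Gamma$ is finite, its set of first-order consequences is recursively enumerable (by enumerating finite-length derivations in any complete proof calculus for first-order logic). A consistent, complete, r.e. theory is decidable: on input $\sigma$, enumerate consequences of $\Gamma$ until one of $\sigma, \neg\sigma$ appears; by completeness one of them eventually must, and by consistency not both. Specializing this decision procedure to sentences of the form $\exists \vec{x}.\,\phi(\vec{x})$, where $\phi$ is an arbitrary quantifier-free $L$-formula, would yield an algorithm that decides whether $\mathbb{M} \models \exists \vec{x}.\,\phi(\vec{x})$, i.e., an algorithm for $T_{s,n}$-satisfiability, contradicting the undecidability theorem of Section~\ref{sec:undecidability}. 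The contradiction shows that $T_{\Gamma}$ is incomplete.

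The main subtlety, and the step that does the real work, is the identification $T_{\Gamma} = \mathrm{Th}(\mathbb{M})$: $T_{\Gamma}$ is a syntactic object (the deductive closure of $\Gamma$) while $\mathrm{Th}(\mathbb{M})$ is a semantic object (the set of sentences true in $\mathbb{M}$), and completeness is precisely what collapses the two; this collapse is what allows an undecidability lower bound on satisfiability in $\mathbb{M}$ to contradict the r.e. upper bound on provability from $\Gamma$. Note that Theorem~\ref{thm:completeIffEquivalent} offers an alternative route via exhibiting two non-elementarily-equivalent models of $\Gamma$ (for instance a non-standard model of Presburger arithmetic paired with $\mathbb{M}$), but the computability argument above seems both shorter and sharper, since it recycles the already-established undecidability theorem rather than requiring a fresh model construction.
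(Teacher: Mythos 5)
Your proof is correct, but it takes a genuinely different route from the paper's. The paper argues model-theoretically: it invokes Theorem~\ref{thm:completeIffEquivalent} (a theory is complete iff all its models are elementarily equivalent), constructs a second model $B$ of $\Gamma$ whose string universe contains only nonempty strings without leading zeroes, and exhibits an explicit sentence $J$ (asserting that $numstr$ defines a bijection between strings and numbers) that holds in $B$ but fails in the canonical model $A$. Your route is computability-theoretic: $\Gamma$ is a finite, hence recursive, axiom set, so by the G\"odel completeness theorem for (many-sorted) first-order logic the entailment set $T_{\Gamma}$ is recursively enumerable; a consistent, complete, r.e.\ theory is decidable; and since completeness together with $\mathbb{M} \models \Gamma$ forces $T_{\Gamma} = \mathrm{Th}(\mathbb{M})$, decidability of $T_{\Gamma}$ would decide sentences $\exists \vec{x}.\,\phi(\vec{x})$ over $\mathbb{M}$, which is exactly the $T_{s,n}$-satisfiability problem shown undecidable in Section~\ref{sec:undecidability}. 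Each approach buys something. The paper's argument yields a concrete independent sentence $J$ and needs neither the undecidability theorem nor any proof calculus; yours proves something strictly stronger, namely that no consistent recursively axiomatized extension of $\Gamma$ true in $\mathbb{M}$ can be complete, so the incompleteness cannot be repaired by simply adjoining $J$ (or any r.e.\ set of $\mathbb{M}$-true axioms). Your argument also sidesteps the one delicate step in the paper's proof --- verifying that the restricted structure $B$ is actually a model of $\Gamma$, which requires, e.g., checking closure of its string sort under concatenation once leading-zero strings are discarded. The only points worth making explicit in a final write-up are (i) that r.e.-ness of semantic entailment from $\Gamma$ rests on the completeness theorem for many-sorted first-order logic, and (ii) that the paper's definition of satisfiability for quantifier-free $T_{s,n}$-formulas is precisely truth of the existential closure in the canonical model, which is what licenses the final contradiction.
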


\begin{proof}
  Consider two models $A, B$ of the theory $T_{\Gamma}$, defined as
  follows: $A$ is the canonical model given in
  Section~\ref{sec:canonicalmodel}, and $B$ is a restricted version of
  $A$ where the only string constants that are allowed
  are nonempty string constants with no leading zeroes. (In other
  words, the only string constant in $B$ that starts with `0' is
  ``0''.) It is easy to see that both of these are models of
  $T_{\Gamma}$.

  Now consider the first-order sentence $J$ which states ``the
  $numstr$ predicate describes a bijection between strings and natural
  numbers''.
\footnote{Note that as long as the alphabet $\Sigma$ is finite and
  string constants are concatenations of a finite number of
  characters, in general there exists a bijection between strings and
  natural numbers. This follows from the fact that the set $\Sigma^*$
  of strings is countably infinite. The argument made in the proof
  above deals with a very particular bijection as defined by
  $numstr$.}

We state this sentence $J$ formally as follows:

\begin{align*}
& \forall_{num} \, i : \exists_{str} \, s : \left( numstr(i,s) \land \forall_{str} \, t : numstr(i,t) \to s = t \right) \\
\land & \forall_{str} \, s : \exists_{num} \, i : \left( numstr(i,s) \land \forall_{num} \, j : numstr(j,s) \to i = j \right)
\end{align*}

  It follows that due to the restrictions on string constants imposed
  in $B$, $numstr$ clearly defines a bijection between strings and
  natural numbers, where each integer is mapped to the unique string
  that is its minimal binary representation, and so $J$ is true in the
  model $B$.  However, in the model $A$, $numstr$ does not define a
  bijection, as by counterexample, $numstr(3, ``11")$ and $numstr(3,
  ``0011")$ are both true.  Therefore $J$ is false in the model $A$.
  
  From this we conclude that $J$ is able to distinguish between $A$
  and $B$, and hence $A$ is not elementarily equivalent to $B$; by
  Theorem~\ref{thm:completeIffEquivalent}, $T_{\Gamma}$ is
  incomplete.
\end{proof}

%
%

\section{Related Work}
\label{sec:relwork}

We provide a relatively comprehensive overview of both theoretical and
practical work done by researchers in the context of theories over
strings.

\subsection{Theoretical Results over Theories of Strings}

In his original 1946 paper, Quine \cite{Quine} showed that the
first-order theory of string equations (i.e., quantified sentences
over Boolean combination of word equations) is undecidable.  Due to
the expressibility of many key reliability and verification questions
within this theory, this work has been extended in many ways.

One line of research studies fragments and modifications of this base
theory which are decidable.  Notably, in 1977, Makanin proved that the
satisfiability problem for the quantifier-free theory of word
equations is decidable \cite{makanin}.  In a sequence of papers,
Plandowski and co-authors showed that the complexity of this problem
is in PSPACE \cite{plandowski2006}.  Stronger results have been found
where equations are restricted to those where each variable occurs at
most twice\cite{robsondiekert} or in which there are at most two
variables \cite{CharaPach, IliePland, dabrowski2002weo}. In the first
case, satisfiability is shown to be NP-hard; in the second, polynomial
(which was improved further in the case of single variable word
equations). Concurrently, many researchers have looked for the exact
boundary between decidability and undecidability. Durnev \cite{durnev}
and Marchenkov \cite{marchenkov} both showed that $\forall\exists$
sentences over word equations is undecidable.  Despite decades of
effort, however, the satisfiability problem for the quantifier-free
theory of word equations and numeric length remains
open~\cite{makanin,plandowski2006,ganesh2012,Matiyasevich}.  More
recently, Artur J\"{e}z presents a technique called recompression that
gives more efficient algorithms for many fragments of theory of word
equations~\cite{jez}.

A related result was shown by
Furia~\cite{DBLP:journals/corr/abs-1001-2100}, wherein he proved that
the quantifier-free theory of integer sequences is decidable.  The
framework he establishes in that paper is closely related to the
theory of concatenation and word equations, but weaker than either
strings plus numeric length or the theory of arrays due to the
inability of the theory of sequences
to express facts relating indices directly to elements.

Word equations augmented with additional predicates yield richer
structures which are relevant to many applications, as we have
considered here.  In the 1970s, Matiyasevich formulated a connection
between string equations augmented with integer coefficients whose
integers are taken from the Fibonacci sequence and Diophantine
equations~\cite{MatiyasevichHilbertPub,Matiyasevich}.  In particular, he showed that proving
undecidability for the satisfiability problem of this theory would
suffice to solve Hilbert's Tenth Problem in a novel way.

Schulz \cite{schulz} extended Makanin's satisfiability algorithm to
the class of formulas where each variable in the equations is
specified to lie in a given regular set (i.e. a set defined by a regular language).
This is a strict generalization of the solution sets of word equations.  Further work
in~\cite{KarhumakiPM97} shows that the class of sets expressible
through word equations is incomparable to that of regular sets.
Matiyasevich extends Schulz's result to decision problems
involving trace monoids and free partially commutative monoids~\cite{matiyasevichExtra1,matiyasevichExtra2,matiyasevichExtra3}.

M\"oller~\cite{Moller} studies word equations and related theories as
motivated by questions from hardware verification. More specifically,
M\"oller proves the undecidability of the existential fragment of a
theory of fixed-length bit-vectors, with a special finite but parameterized
concatenation operation, extraction of substrings, and
equality predicate. Although this theory is related to the word
equations that we study, it is more powerful because of the
finite but possibly arbitrary concatenation.

The question of whether the satisfiability problem for the
quantifier-free theory of word equations and length constraints is
decidable has remained open for several decades.  Our decidability
results are a partial and conditional
solution. Matiyasevich~\cite{matiyasevich2008} observed the relevance
of this question to a novel resolution of Hilbert's Tenth Problem.  In
particular, he showed that if the satisfiability problem for the
quantifier-free theory of word equations and length constraints is
undecidable, then it gives us a new way to prove Matiyasevich's
Theorem (which resolved the famous problem)~\cite{Matiyasevich,
  matiyasevich2008}.

B\"{u}chi et al.~\cite{Buchi90} consider extensions of the
quantifier-free theory of word equations with various length
predicates. They find that a predicate $Elg$ that asserts that two
strings have equal length is not existentially definable in this
theory, and that by introducing two stronger functions,
$Lg_{1}$ and $Lg_{2}$ which count the number of occurrences of the
characters `1' and `2' respectively, the resulting theory is
undecidable.

The source of undecidability, as the authors identify, is the
ability for these functions to match the number of occurrences of
certain subsequences, which allows them to encode addition and
multiplication.  Our result is similar to
this one; B\"{u}chi proposes an encoding of arithmetic into word
equations, while we assume an extension of word equations that already
contains the $len$ function and natural number arithmetic (as well as
$numstr$), and encode an arithmetic operation into operations on
strings.

\subsection{String Solvers and their application in Program Analysis, Bug-finding, and Verification}

Formulas over strings became important in the context of automated
bug-finding~\cite{hampi, prateek} and analysis of database/web
applications~\cite{emmiMS2007, rupak, WassermannSu2007}. These program
analysis and bug-finding tools read string-manipulating programs and
generate formulas expressing their outputs.  These formulas contain
equations over string constants and variables, membership queries over
regular expressions, inequalities between string lengths, and in some
cases the string-integer conversion predicate/functions.  In practice,
formulas of this form have been solved by off-the-shelf solvers such
as HAMPI~\cite{hampi2, hampi}, Z3str2~\cite{z3strcav},
CVC4~\cite{Liang2014}, or Kaluza~\cite{prateek}. All these solvers are
based on sound algorithms, but are incomplete in different ways.

Zheng et al. \cite{z3strcav} present the Z3str2 solver for the
quantifier-free many-sorted theory $T_{wlr}$ over word equations,
membership predicate over regular expressions, and length function,
which consists of the string (str) and numeric (num) sorts.

S3~\cite{s3} is another tool that supports word equations, length
function, and regular expression membership predicate. S3 internally
uses a version of Z3str2 to handle word equations and length
functions.

CVC4 \cite{Liang2014} handles constraints over the theory of unbounded
strings with length and RE membership. Solving is based on
multi-theory reasoning backed by the DPLL($T$) architecture combined
with existing SMT theories.  The Kleene star operator in RE formulas
is dealt with via unrolling as in Z3str2.

In a separate paper, Liang et al.~\cite{Liang2015} give a decision
procedure for regular language membership and numeric length
constraints over unbounded strings. However, their decision procedure
does not consider word equations, and hence is many ways weaker than
the theory $T_{s,n}$ we consider in this paper.  Hence the algorithm
they propose, while useful in some contexts, is weaker than the full
theory of strings, and their result does not yet resolve the question
of whether the quantifier-free theory of strings and numeric length
constraints is decidable.

It must be stressed that all the solvers (including Z3str2, CVC4, and
S3) that purportedly solve the satisfiability problem for the theory
$T_{s,n}$ or the word equation and length function fragment of
$T_{s,n}$ are incomplete. Solvers such as HAMPI are limited by the
fact that they reason only over a bounded string domain, where the
bound is given as part of the input.

Pex \cite{Tillmann2008} is a parameterized unit testing tool for .NET
that observes program behaviour and uses a constraint solver in order
to produce test inputs which exercise new program behaviour.  It
integrates a specialized string solver in order to generate string
inputs that satisfy the desired branch conditions.

\section{Conclusions and Future Work}
\label{sec:conclusions}

In recent years there has been considerable interest in satisfiability
procedures (aka solvers) for theories over string equations, length,
and string-number conversions in the verification and security
communities~\cite{z3strcav, Liang2014}. These theories are also of
great interest to logicians, since there are many open problems
related to their decidability and complexity. We showed that a
first-order many-sorted quantifier-free theory $T_{s,n}$ of string
equations, linear arithmetic over length function, and string-number
conversion predicates, variations of which have been implemented in
solvers such as Z3str2 and CVC4, is undecidable. We establish expressibility
results for $numstr$ predicate that suggest that this predicate is far
more complex than appears at first glance. Finally, we also provide a
consistent axiomatization $\Gamma$ for the symbols of $T_{s,n}$, and
show that the theory $T_{\Gamma}$ is incomplete.

There are many decidability, complexity and efficient encoding
questions related to fragments of $T_{s,n}$ that remain open. For
example, it is not known whether the theory of word equations and
arithmetic over length functions is decidable~\cite{Matiyasevich}. The
satisfiability problem for the quantifier-free theory of string
equations by itself is known to be in PSPACE; however, it is not known
whether it is PSPACE-complete~\cite{plandowski2006}. Yet another open
question concerns efficient encoding of functions such as ``Replace''
that are heavily used in many programming languages, and predicates
such as string comparison. More generally, efficient encoding of
common programming language string-intensive functions and predicates
in terms of $T_{s,n}$-functions and predicates can be of great value
to practitioners, and remains a challenging practical problem.

\bibliographystyle{abbrv}
\bibliography{strings}
\end{document}